\documentclass[11pt]{article}
\usepackage{amsmath,amsthm,amssymb,hyperref, color}
\usepackage{fullpage}

\newtheorem{thm}{Theorem}[section]
\newtheorem{claim}[thm]{Claim}
\newtheorem{lem}[thm]{Lemma}
\newtheorem{define}[thm]{Definition}
\newtheorem{cor}[thm]{Corollary}

% Special commands

\def\F{{\mathbb{F}}}
\def\Q{{\mathbb{Q}}}
\def\Z{{\mathbb{Z}}}

\def\cF{{\cal F}}

\def\cM{{\cal M}}

\def\cR{{\mathcal{R}}}

\newcommand{\ip}[2]{\langle #1,#2 \rangle}

\def\_{\,\,\,\,\,}

\def\rank{\textsf{rank}}

\newcommand{\eps}{\epsilon}

\newcommand{\remove}[1]{}

\begin{document}

\title{Matrix Rigidity and the Croot-Lev-Pach Lemma}

\author{Zeev Dvir\thanks{Department of Computer Science and Department of Mathematics,
Princeton University.
Email: \texttt{zeev.dvir@gmail.com}. Research supported by NSF CAREER award DMS-1451191 and NSF grant CCF-1523816 } \and
Ben Edelman\thanks{Department of Mathematics, Princeton University.
Email: \texttt{benedelman@princeton.edu}.}}

\date{}
\maketitle

\begin{abstract}
Matrix rigidity is a notion put forth by Valiant \cite{Val77} as a means for proving arithmetic circuit lower bounds. A matrix is rigid if it is far, in Hamming distance, from any low rank matrix. Despite decades of efforts, no explicit matrix rigid enough to carry out Valiant's plan has been found. Recently, Alman and Williams \cite{AW17} showed, contrary to common belief, that the $2^n \times 2^n$ Hadamard matrix $H_n = \left( (-1)^{\ip{x}{y}} \right)_{x,y \in \F_2^n}$ could not be used for Valiant's program as it is not sufficiently rigid.

In this note we observe a similar `non rigidity' phenomena for {\em any}  $q^n \times q^n$ matrix $M$ of the form $M(x,y) = f(x+y)$, where $f:\F_q^n \to \F_q$ is any function and $\F_q$ is a fixed finite field of $q$ elements ($n$ goes to infinity). The theorem follows almost immediately from a recent lemma of Croot, Lev and Pach \cite{CLP17} which is also the main ingredient in the recent solution of the  cap-set problem \cite{EG17}.
\end{abstract} 

%\pagenumbering{arabic}
%%%%%%%%%%%%%%%%%%%%%%%%%%%%%%%%%%%%%%%%%%%%%%%%%5
\section{Introduction}
%%%%%%%%%%%%%%%%%%%%%%%%%%%%%%%%%%%%%%%%%%%%%%%%%5

We begin by defining the notion of matrix rigidity -- a property of matrices that combines combinatorial conditions (Hamming distance) with algebraic ones (matrix rank). Recall that the Hamming distance between two vectors $x,y \in \Sigma^n$ over some alphabet $\Sigma$ is equal to the number of entries $i \in [n]$ for which $x_i \neq y_i$.
\begin{define}[Matrix rigidity]
The rank-$r$ rigidity of a matrix $M$ over a field $\F$, denoted $\cR_M^\F(r)$,  is defined as the minimum Hamming distance between $M$ and any matrix of rank at most $r$. In other words,  $\cR_M^\F(r)$ is equal to the smallest number of entries in $M$ that one needs to change in order to reduce the rank of $M$ to $r$.
\end{define}

Specifying the field is important since some integer matrices can have much higher rigidity over the rational numbers than over finite fields (this holds true even if one only considers the rank itself). 

The notion of matrix rigidity was introduced by  Valiant \cite{Val77} in the context of studying  the arithmetic circuit complexity of linear transformations. A \emph{linear circuit} is a model of computation in which the inputs represent the basic linear function $x_1,\ldots,x_n$ and each gate  takes two previously computed linear forms and outputs some linear combination of them with coefficients in the field. We measure the size of a linear circuit by counting the number of wires, and the depth by the longest path from input to output. A linear circuit with $n$ inputs and $n$ outputs computes a linear map $T: \F^n \mapsto \F^n$ and many important linear maps (e.g., Fourier transform) can be computed efficiently in this model.  One can even show that any use of multiplication gates can be eliminated (with negligible cost) when computing a linear map \cite{Lok09}. 

One of the most important problems in theoretical computer science is to prove unconditional complexity lower bounds for realistic models of computation. Despite decades of attempts, we are still unable to prove super-linear circuit lower bounds (in any realistic model) for logarithmic depth circuits. In an early attempt to bridge this gap Valiant proved the following theorem. 
\begin{thm}[Valiant \cite{Val77}]\label{val}
Let $M$ be an $N\times N$ matrix over a field $\F$. If  $$\cR_M^\F(N/\log{\log{N}}) \geq \Omega(N^{1+\varepsilon})$$ for some $\varepsilon > 0$  then $M$ cannot be computed by linear circuits of size $O(N)$ and depth $O(\log(N))$ (asymptotically, as $N$ grows\footnote{To be more precise, one would have to consider the rigidity of an infinite sequence of matrices indexed by $N$.}). 
\end{thm}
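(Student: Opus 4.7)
The plan is to prove the contrapositive: if $M$ is computable by a linear circuit $C$ of size $s = O(N)$ and depth $d = O(\log N)$, then for every $\varepsilon > 0$ we have $\cR_M^\F(N/\log\log N) = o(N^{1+\varepsilon})$, contradicting the hypothesis. Concretely, for every such $\varepsilon$ I would exhibit a decomposition $M = R + S$ with $\rank(R) \leq N/\log\log N$ and $|\supp(S)| = o(N^{1+\varepsilon})$.

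The main technical tool is a graph-theoretic depth-reduction lemma due to Valiant: any DAG $G$ with $s$ edges, fan-in at most $2$, and depth $d$ admits, for every parameter $r \geq 2$, a set $E^\star \subseteq E(G)$ of at most $O(s/\log r)$ edges whose removal leaves a DAG of depth $O(d/\log r)$. I would apply this to the DAG underlying $C$ with $r = \log N$, obtaining $|E^\star| = O(N/\log\log N)$ edges whose removal gives a pruned DAG of depth $O(\log N/\log\log N)$.

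Given $E^\star$, build the decomposition as follows. For each $e \in E^\star$ introduce a fresh variable $z_e$ equal to the linear form computed at the tail of $e$ in $C$; assemble these forms as the rows of a matrix $L \in \F^{|E^\star| \times N}$, so $Lx = z$. In the pruned circuit each output becomes a linear combination of the $N + |E^\star|$ variables $(x, z)$, defining a matrix $\tilde M = [M_A \mid M_B]$ with
\[
M \;=\; M_A \;+\; M_B L.
\]
Set $R = M_B L$ and $S = M_A$. Then $\rank(R) \leq |E^\star| = O(N/\log\log N)$, as required. For $S$, note that $(M_A)_{i,j} \neq 0$ only if there is a path from $x_j$ to $y_i$ in the pruned DAG; since that DAG has depth $O(\log N/\log\log N)$ and fan-in $2$, each row of $M_A$ has at most $2^{O(\log N/\log\log N)} = N^{O(1/\log\log N)}$ nonzero entries, and hence $|\supp(S)| \leq N^{1 + O(1/\log\log N)} = o(N^{1+\varepsilon})$ for every fixed $\varepsilon > 0$.

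The main obstacle is the graph-theoretic lemma itself; it is proved by a careful recursive vertex-labeling argument that exploits both the depth bound and the bounded fan-in, and it is what forces the characteristic $\log\log N$ threshold in the theorem. The specific choice $r = \log N$ is not ad hoc: it is the unique balance point at which the edge budget saturates the target rank $N/\log\log N$, and it happens to simultaneously drive the row-sparsity of $M_A$ down to a subpolynomial factor, leaving plenty of room to beat any $N^{1+\varepsilon}$.
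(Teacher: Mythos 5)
The paper itself gives no proof of this theorem; it is quoted from Valiant, so there is nothing internal to compare against. Your overall strategy is the correct, standard one: prove the contrapositive, cut a small set of edges to reduce the circuit's depth, let the cut gates supply the low-rank summand via $M = M_A + M_B L$ with $\rank(M_B L) \le |E^\star|$, and bound the support of $M_A$ by backward reachability in the shallow residual fan-in-$2$ DAG. That decomposition and both bounds are exactly right.

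The gap is in the depth-reduction lemma, which you have stated with the wrong quantitative tradeoff, and your parameter choice depends on the wrong version. Valiant's lemma says: a DAG with $s$ edges and depth $d \le 2^k$ admits, for any $1 \le t \le k$, a set of $O(st/k) = O(st/\log d)$ edges whose removal leaves depth at most $d/2^t$ --- the depth drops by a factor \emph{exponential} in $t$ while the edge cost grows only linearly in $t$. Your version (``remove $O(s/\log r)$ edges to get depth $O(d/\log r)$'') asks for a depth reduction by a factor of $\log r$; in Valiant's tradeoff that costs $t = \log\log r$ rounds, i.e.\ $O(s\log\log r/\log d)$ edges, which at your setting $r = \log N$, $d = O(\log N)$ is $\Theta(N\log\log\log N/\log\log N) = \omega(N/\log\log N)$ and blows the rank budget. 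The tradeoff you assert does not follow from Valiant's lemma, and the conclusion it would yield --- residual depth $O(\log N/\log\log N)$, hence row sparsity $N^{O(1/\log\log N)} = N^{o(1)}$, hence a \emph{single} decomposition of rank $O(N/\log\log N)$ and support $N^{1+o(1)}$ beating every $\eps$ simultaneously --- is strictly stronger than the theorem and is precisely the strengthening that the (essentially tight) lemma does not provide. The fix is to take $t = O(\log(1/\eps))$ \emph{constant}: this removes $O_\eps(N/\log\log N)$ edges and leaves depth at most $\eps'\log N$ for a suitable $\eps' < \eps$, so each row of $M_A$ has at most $2^{\eps'\log N} = N^{\eps'}$ nonzeros and $|\supp(M_A)| \le N^{1+\eps'} = o(N^{1+\eps})$; the $\eps$-dependent constant in the rank term is absorbed into the theorem's asymptotic notation.
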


We can say that a matrix is `Valiant-rigid' if it satisfies the rigidity parameters in the above theorem. It is straightforward to check that for any matrix $M$ and field $\F$, $\mathcal{R}_M^\F(r) \leq (N-r)^2$ for any $r$. Valiant proved that almost all matrices achieve this maximum rigidity: for almost all matrices $M$, $\mathcal{R}_M^\F(r) = (N-r)^2$ if $\F$ is infinite and $\mathcal{R}_M^\F(r) = \Omega((N-r)^2/\log{N})$ if $\F$ is finite. However, since Valiant's original paper, it remains an open problem to find an explicit `Valiant-rigid' matrix. By `explicit' we mean a matrix that can be produced in polynomial (in $N$) time by a Turing machine given $N$ as input.

The current best rigidity lower bound for any explicit matrix is $\mathcal{R}_{M}^F(r) = \Omega(\frac{N^2}{r}\log{\frac{N}{r}})$ \cite{Fri93,SSS97}. Until recently, the $2^n \times 2^n$ Hadamard matrix $H = ((-1)^{\langle x,y\rangle})_{x,y \in \{0,1\}^n}$ was conjectured to be Valiant-rigid over the rational numbers \cite{Lok09}. A recent surprising result of Alman and Williams   \cite{AW17} showed that in fact the Hadamard matrix is not sufficiently rigid. Denoting $N = 2^n$,  they showed that for every $\eps>0$ there exists $\eps'>\Omega(\eps^2/\log(1/\eps))$ such that  $\mathcal{R}_H^\Q(N^{1-\eps'}) \leq N^{1+\eps}$. 

The purpose of this note is to observe another `non-rigidity' phenomenon for a related (large) family of matrices. The hope is that by understanding the reasons for this non-rigidity we can perhaps get closer to proving stronger rigidity results. Our main theorem is the following.
\begin{thm}\label{thm-main}
Let $\F_q$ be any finite field and let $f: \F_q^n \to \F_q$ be any function. Let $M$ be the $q^n \times q^n$ matrix defined by $M_{x,y} = f(x+y)$ for $x,y \in \F_q^n$. Denoting $N = q^n$ we have that for any $\eps>0$, there exists $\eps'>0$ such that $\mathcal{R}_M^{\F_q}(N^{1-\eps'}) \leq N^{1+\eps}$. The result holds for fixed $q$ and $\eps$ and $n$ sufficiently large.\end{thm}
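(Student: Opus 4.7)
The plan is to decompose $M = M_L + M_H$ so that $M_L$ has low rank by the Croot-Lev-Pach lemma and $M_H$ has small support. Since any function $f: \F_q^n \to \F_q$ has a unique representation as a polynomial $f(z) = \sum_\gamma a_\gamma z^\gamma$ with each variable of degree at most $q-1$, I fix a threshold $d$ to be chosen later and set $f_L$ (resp.\ $f_H$) equal to the sum of those monomials $a_\gamma z^\gamma$ with $|\gamma| \leq d$ (resp.\ $|\gamma| > d$), defining $M_L(x,y) = f_L(x+y)$ and $M_H(x,y) = f_H(x+y)$.

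The rank bound for $M_L$ is exactly the Croot-Lev-Pach argument. Substituting $z_i = x_i + y_i$ preserves total degree, so $f_L(x+y)$ is a polynomial in $x_1, \ldots, x_n, y_1, \ldots, y_n$ of total degree at most $d$, with each variable of degree at most $q-1$. Every monomial in its expansion therefore has the form $x^\alpha y^\beta$ with $|\alpha|+|\beta|\leq d$, so either $|\alpha| \leq d/2$ or $|\beta|\leq d/2$. Splitting the sum accordingly writes $M_L(x,y) = \sum_{|\alpha|\leq d/2} x^\alpha g_\alpha(y) + \sum_{|\beta|\leq d/2} h_\beta(x) y^\beta$ for some polynomials $g_\alpha, h_\beta$, exhibiting $M_L$ as a sum of at most $2 m_{d/2}$ rank-one matrices, where $m_t := |\{\alpha \in \{0,\ldots,q-1\}^n : |\alpha|\leq t\}|$.

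The sparsity of $M_H$ comes from a union bound on the supports of its monomials. The monomial $z^\gamma$ is nonzero only at points $z$ with $z_i \neq 0$ for every $i \in \supp(\gamma)$, so its support has size $(q-1)^{\supp(\gamma)}\, q^{n-\supp(\gamma)} \leq q^n \left(\frac{q-1}{q}\right)^{|\gamma|/(q-1)}$ (using $\supp(\gamma) \geq |\gamma|/(q-1)$ since each $\gamma_i \leq q-1$). Summing over $\gamma$ with $|\gamma|>d$ yields an upper bound on $|\{w \in \F_q^n : f_H(w) \neq 0\}|$; since for each such $w$ there are exactly $q^n$ pairs $(x,y)$ with $x+y = w$, we conclude that $|\supp(M_H)| \leq N \cdot |\{w : f_H(w)\neq 0\}|$.

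Finally, I choose $d = (1-\delta)\,n(q-1)$ with a small parameter $\delta = \delta(\eps, q) > 0$. A Chernoff/Hoeffding bound for $|\alpha|$ where $\alpha$ is uniform in $\{0,\ldots,q-1\}^n$ gives $m_{d/2} \leq N^{1 - c\delta^2}$ for some $c = c(q) > 0$, while the generating-function identity $\sum_\gamma x^{|\gamma|} = \left(\tfrac{1-x^q}{1-x}\right)^n$ (evaluated near $x = 1$ via Chernoff's trick, or directly via Hoeffding) bounds $|\supp(M_H)|$ by $N^{1 + O_q(\delta \log(1/\delta))}$. Taking $\delta$ of order $\eps/\log(1/\eps)$ produces $\eps' = \Omega(\eps^2/\log^2(1/\eps))$ for fixed $q$. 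The main calculational obstacle is the support estimate for $q > 2$: the $q = 2$ case cleanly reduces to the binomial sum $\sum_{k > d}\binom{n}{k} 2^{n-k}$, but for general $q$ one has to handle weighted sums over multi-indices with bounded coordinates, most cleanly by applying Hoeffding's inequality to $n$ independent variables taking values in $\{0, \ldots, q-1\}$.
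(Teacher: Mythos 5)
There is a genuine gap in the sparsity half of your argument, and it is not merely a ``calculational obstacle'': for $q\geq 3$ the truncation decomposition $f = f_L + f_H$ simply does not make $f_H$ sparse as a function. The problem is that a single high-degree monomial over $\F_q$ can have enormous support. Take $f(z) = z_1^{q-1}z_2^{q-1}\cdots z_n^{q-1}$, which is already in reduced form and is the indicator function of $(\F_q^*)^n$. For any threshold $d < (q-1)n$ you get $f_L = 0$ and $f_H = f$, and $|\{w : f_H(w)\neq 0\}| = (q-1)^n = N^{\log_q(q-1)}$, an exponent bounded away from $0$ (e.g.\ $N^{0.63}$ for $q=3$), so $|\supp(M_H)|$ is roughly $N^{1+\log_q(q-1)}$ rather than $N^{1+\eps}$. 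Your own estimate already shows this: the bound $|\supp(z^\gamma)| \leq q^n\left(\frac{q-1}{q}\right)^{|\gamma|/(q-1)}$ evaluates to $(q-1)^n$ at $\gamma = (q-1,\dots,q-1)$, so the sum over $|\gamma|>d$ contains a single term of size $N^{\Omega(1)}$ and no Chernoff/Hoeffding manipulation of the generating function $\left(\frac{1-x^q}{1-x}\right)^n$ can bring it down to $N^{O_q(\delta\log(1/\delta))}$. (Your argument is fine for $q=2$, where $(q-1)^n=1$ and high-degree multilinear monomials really are sparse; that is why the binomial computation goes through cleanly there.)

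The missing idea is that the approximating low-degree polynomial should not be the truncation of $f$'s monomial expansion. What one needs is only that the evaluation vectors of $\cF_d(q,n)$ form a subspace of $\F_q^{q^n}$ of dimension $m_d(q,n)$, hence codimension $q^n - m_d(q,n)$; a simple linear-algebra fact (Claim~\ref{cla-dim} and Corollary~\ref{cor-approx} in the paper) then produces \emph{some} $P\in\cF_d(q,n)$ agreeing with $f$ on all but $q^n - m_d(q,n)$ points --- controlling the number of changed \emph{evaluations} rather than the number of discarded \emph{coefficients}. Combined with your (correct) tail estimate $q^n - m_{(1-\delta)(q-1)n} \leq q^{\eps n}$ and the CLP rank bound, which you state correctly, this yields the theorem. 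If you repair the decomposition this way, the rest of your write-up (the rank-one expansion of $P(x+y)$, the factor-$N$ passage from functions to matrix entries, and the Chernoff bound on $m_{\lfloor d/2\rfloor}$) stands as is.
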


One should note that, unlike the Hadamard matrix, these matrices are over a finite field and not over the rational numbers. Having non-rigid matrices over a finite field is a bit less surprising since there are more `ways' for the rank to be low. It is an interesting open problem to determine if Theorem~\ref{thm-main} still holds if one is allowed to take a function $f: \F_q^n \mapsto \F$ where $\F$ is the rational numbers (or even the complex numbers). This will imply the results of \cite{AW17} since   the Hadamard matrix  can  be  written over the complex numbers as  $$(-1)^{\langle x,y\rangle} = (-1)^{|x|/2}(-1)^{|y|/2}(-1)^{|x \oplus y|/2},$$ where $|\cdot|$ represents the Hamming weight.

Another interesting question is that of replacing the group $\F_q^n$ indexing the rows/columns with other groups. For example, taking $N \times N$ matrices with entries $f(x+y)$ but with $f: \Z/N\Z \mapsto \F$ an arbitrary function. Here one might expect to see higher rigidity since there are far fewer low rank matrices of this form (c.f, the recent work of Goldreich and Tal \cite{GT15} on the rigidity of Toeplitz matrices). 

\subsection{The Croot-Lev-Pach (CLP) lemma}
A \emph{cap set} is a subset of $\F_q^n$ with no non-trivial three-term arithmetic progressions. We think of $q>2$ as fixed and $n$ going to infinity. The cap set problem asks how the size of the largest possible cap set (denoted $r(n)$) grows in terms of $n$. It was an open question whether $r(n) \leq c^n$ for some $c < q$. Croot, Lev, and Pach \cite{CLP17} used a variant of the polynomial method to solve the corresponding problem for $\Z_4^n$  (the ring mod 4) in the affirmative, proving a bound of $c^n$ for some $c<4$, and soon afterwards Ellenberg and Gijswijt \cite{EG17} adapted the CLP result to provide a positive answer to the cap set problem in $\F_q$ for all $q>2$. At the core of \cite{CLP17} is a lemma saying that, if $P: \F_q^n \to \F_q$ is a polynomial of not too high degree, then the $q^n \times q^n$ matrix $M = (P(x+y))_{x,y \in \F_q^n}$ has very low rank (see below for the exact parameters). We observe that, since {\em any} function can be well approximated by such a polynomial, the matrix $f(x+y)$ can be changed in a small number of entries to give the low rank matrix $P(x+y)$.

\section{Proof of Theorem~\ref{thm-main}}

Let $\cF(q,n)$ denote the set of functions $f: \F_q^n \mapsto \F_q$.  Then, $\cF(q,n)$ is an $\F_q$-vector space of dimension $q^n$. A basis for this vector space is given by the set of $q^n$ monomials $$\cM(q,n) = \{ x_1^{a_1}\cdots x_n^{a_n} \,|\, 0\leq a_i \leq q-1 \}.$$ Let us denote by $\cM_d(q,n)$ the set of monomials in $\cM(q,n)$ of total degree  at most $d$ and by $\cF_d(q,n)$ the set of polynomials of degree at most $d$ spanned by these monomials. Let $m_d(q,n)$ denote the size of $\cM_d(q,n)$ or equivalently the dimension of $\cF_d(q,n)$. 

We start by stating the precise form of the CLP lemma. For completeness we include a short sketch of the proof.
\begin{lem}[CLP lemma \cite{CLP17}]\label{lem-clp}
	Let $P \in \cF_d(q,n)$ and let $M$ denote the $q^n \times q^n$ matrix with entries $M_{x,y} = P(x+y)$ for $x,y \in \F_q^n$. Then $\rank(M) \leq 2\cdot m_{\lfloor d/2 \rfloor}(q,n).$
\end{lem}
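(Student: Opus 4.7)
The plan is to expand $P(x+y)$ in the $x,y$ monomial basis, split the expansion into two halves according to whether the $x$-degree or $y$-degree is small, and read off a rank bound from each half. This is the polynomial method at the heart of \cite{CLP17}, and is essentially the only content.

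First I would write $P(x+y)$ as a polynomial in the $2n$ variables $x_1,\ldots,x_n,y_1,\ldots,y_n$. Because $P\in\cF_d(q,n)$ has total degree at most $d$ and each variable's exponent is at most $q-1$, substituting $x_i+y_i$ for each variable of $P$ (and using the binomial expansion) yields a polynomial in $x,y$ whose monomials $x^\alpha y^\beta$ all satisfy $\alpha_i+\beta_i\le q-1$ (so no reduction modulo $x_i^q-x_i$ is needed) and $|\alpha|+|\beta|\le d$. Thus no renormalization issues arise.

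Next I would use the key observation: for any monomial $x^\alpha y^\beta$ appearing in the expansion, since $|\alpha|+|\beta|\le d$, either $|\alpha|\le\lfloor d/2\rfloor$ or $|\beta|\le\lfloor d/2\rfloor$. Group the monomials so that each one is assigned to exactly one of the two halves (breaking ties arbitrarily), giving
\[
P(x+y)=\sum_{\alpha\,:\,|\alpha|\le\lfloor d/2\rfloor} x^\alpha\,g_\alpha(y)\;+\;\sum_{\beta\,:\,|\beta|\le\lfloor d/2\rfloor} h_\beta(x)\,y^\beta,
\]
for some polynomials $g_\alpha,h_\beta$ which collect the remaining factors. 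Each summand in the first sum, when viewed as a function of $(x,y)$ and hence as a matrix indexed by $x,y\in\F_q^n$, is an outer product: the column indexed by $y$ depends only on $y$ (it is the vector $g_\alpha(y)$ times the constant column $x^\alpha$ viewed as a function of $x$). So each summand contributes a rank-one matrix, and analogously for the second sum.

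Finally, I would conclude by counting: the first sum is a sum of at most $m_{\lfloor d/2\rfloor}(q,n)$ rank-one matrices, and so is the second, giving $\rank(M)\le 2\cdot m_{\lfloor d/2\rfloor}(q,n)$. There is no real obstacle here; the only point requiring a pinch of care is the assignment step (ensuring we don't double-count monomials whose $x$- and $y$-degree are both small), which is handled by assigning each monomial to exactly one of the two groups. The rest is routine.
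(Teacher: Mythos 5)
Your proof is correct and follows essentially the same route as the paper's own sketch: expand $P(x+y)$ into monomials $x^\alpha y^\beta$, note that each satisfies $|\alpha|\le\lfloor d/2\rfloor$ or $|\beta|\le\lfloor d/2\rfloor$, and group by the low-degree part to obtain a sum of at most $2\cdot m_{\lfloor d/2\rfloor}(q,n)$ rank-one matrices. Your added remarks on tie-breaking and on no reduction modulo $x_i^q-x_i$ being needed are correct but inessential refinements of the same argument.
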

\begin{proof}[Proof sketch]
	To prove the claim we will show that $P(x+y) = \sum_{i=1}^R f_i(x)g_i(y)$ with $R \leq 2\cdot m_{\lfloor d/2 \rfloor}(q,n)$. To see how to do this observe that, for each monomial $m(x) = x_1^{a_1}\cdots x_n^{a_n}$ of degree at most $d$, the terms in the expression $m(x+y)$ all have degree $\leq \lfloor d/2 \rfloor$ in either $x$ or $y$. Writing $P$ as a sum of monomials and grouping together terms with the same low degree parts (in $x$ first and then in $y$) gives the desired decomposition.
\end{proof}

The main power of the CLP lemma comes from the following quantitative observation. For a fixed $q$ and sufficiently large $n$, the numbers $m_d(q,n)$ behave approximately like a normal curve when we increase $d$ from $0$ to $(q-1)n$ (the largest possible degree). Most of the mass will be concentrated around the middle $(q-1)n/2$ with the tails decaying exponentially fast. We will use the following (weak) estimate.

\begin{claim}\label{cla-largedev}
For any prime power $q$ and any $\eps>0$ there exists $\delta>0$ such that, for sufficiently large $n$, we have 
$$ m_{(1-\delta)(q-1)n} \geq q^n - q^{\eps n}.$$
\end{claim}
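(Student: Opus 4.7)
The plan is to recast the claim as a large-deviation bound. Note that $q^n - m_{(1-\delta)(q-1)n}(q,n)$ counts the exponent vectors $\alpha = (\alpha_1,\ldots,\alpha_n) \in \{0,1,\ldots,q-1\}^n$ whose coordinate sum exceeds $(1-\delta)(q-1)n$. Equivalently, if $X_1,\ldots,X_n$ are i.i.d.\ uniform on $\{0,1,\ldots,q-1\}$, this count equals $q^n \cdot \Pr\!\bigl[\sum_i X_i > (1-\delta)(q-1)n\bigr]$. Since $\E[X_i] = (q-1)/2$, for $\delta < 1/2$ this is an upper-tail probability that decays exponentially, and the goal is to show that the exponential rate approaches $\log q$ as $\delta \to 0$, because then $q^n \cdot e^{-n\cdot\mathrm{rate}} \le q^{\eps n}$ for small enough $\delta$.

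Next I would apply a standard Chernoff / exponential Markov bound. For any $s > 0$,
$$\#\bigl\{\alpha \in \{0,\ldots,q-1\}^n : \textstyle\sum_i \alpha_i > T\bigr\} \;\le\; e^{-sT}\, G(s)^n, \quad \text{where } G(s) := \sum_{a=0}^{q-1} e^{sa} = e^{s(q-1)}\cdot\frac{1-e^{-sq}}{1-e^{-s}}.$$
With $T = (1-\delta)(q-1)n$, the bound becomes $e^{n\left[\log G(s) - s(1-\delta)(q-1)\right]}$. From the factored expression above, $\log G(s) = s(q-1) + O(e^{-s})$ as $s \to \infty$, so the exponent simplifies to $s\delta(q-1) + O(e^{-s})$.

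Choosing $s$ on the order of $\log(1/\delta)$ balances these two contributions and makes the exponent $O(\delta \log(1/\delta))$, which tends to $0$ as $\delta \to 0$. Hence for any fixed $\eps$ and prime power $q$, taking $\delta$ sufficiently small drives the per-variable exponent below $\eps \log q$, which yields the count bound $q^{\eps n}$ and hence the claim. There is no real obstacle here---this is a routine Cram\'er-type computation, the only mild care being the quantitative choice of $s$ in terms of $\delta$. Alternatively, one could cite Cram\'er's theorem as a black box: the rate function $I$ of the uniform distribution on $\{0,\ldots,q-1\}$ satisfies $I(q-1) = \log q$ (trivially, since the only tuple with $\sum \alpha_i = n(q-1)$ occurs with probability $q^{-n}$), and $I$ is continuous at $q-1$, so $I\bigl((1-\delta)(q-1)\bigr) \to \log q$ as $\delta \to 0$, giving the same conclusion.
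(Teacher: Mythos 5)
Your proof is correct, but it takes a genuinely different route from the paper. You bound the number of exponent vectors $\alpha$ with $\sum_i \alpha_i > (1-\delta)(q-1)n$ directly by an exponential-moment (Chernoff/Cram\'er) argument for the sum of i.i.d.\ uniforms on $\{0,\ldots,q-1\}$, with the explicit tilt $s \asymp \log(1/\delta)$ making the per-coordinate exponent $O(\delta\log(1/\delta)) \to 0$; the key identity $G(s) = e^{s(q-1)}\frac{1-e^{-sq}}{1-e^{-s}}$ and the resulting exponent $s\delta(q-1)+O(e^{-s})$ check out, and the count $q^n - m_{(1-\delta)(q-1)n}$ is indeed exactly the upper-tail count you estimate. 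The paper instead first uses the degree-reversal symmetry $a_i \mapsto (q-1)-a_i$ to replace the upper tail by $m_{\delta(q-1)n}$, then injects each $q$-ary monomial $x_i^{a_i}$ into a multilinear monomial in $(q-1)n$ binary variables to get $m_d(q,n) \le m_d(2,n(q-1))$, and finally quotes the entropy bound $\sum_{k \le \delta m}\binom{m}{k} \le 2^{H(\delta)m}$. Your argument is more self-contained and quantitatively sharper (it gives the true Cram\'er rate and works for any bounded alphabet without the detour through the binary case), at the cost of carrying out the optimization over $s$; the paper's is shorter if one takes the binomial tail bound as known. Either way the conclusion $q^n - m_{(1-\delta)(q-1)n} \le q^{\eps n}$ follows for $\delta$ small enough in terms of $q$ and $\eps$.
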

\begin{proof}
  By symmetry it is enough to bound $m_{\delta(q-1)n} \leq q^{\eps n}$. We reduce this problem to the binary alphabet case. We claim that $m_d(q,n) \leq m_d(2,n(q-1))$ for all $d$. To see this, consider the injective mapping from $\cM_d(q,n)$ into $\cM_d(2,n(q-1))$ sending $x_i^{a_i}$ to the multilinear monomial $x_{i1}x_{i2}\cdots x_{ia_i}$. For the binary case we can use the standard tail bounds for the Binomial distribution to get that $m_{\delta(q-1)n}(2,n(q-1)) \leq 2^{H(\delta)(q-1)n}$ with $H(\delta)$ going to zero with $\delta$ ($H$ is the binary entropy function). Taking $\delta$ sufficiently small (as a function of $q$ and $\eps$) we can get $2^{H(\delta)(q-1)} \leq q^\eps$, proving the claim.
\end{proof}

The following  claim  and corollary show that any function can be approximated well by a polynomial of sufficiently high degree.

\begin{claim}\label{cla-dim}
Suppose $V$ is a finite-dimensional vector space over a field $\F$ and $W$ is a subspace of $V$. Let $\mathcal{B}$ be a basis for $V$. Then, for any vector $v \in V$, we can modify $\dim{V}-\dim{W}$ of the coordinates of $v$ (in the basis $\mathcal{B}$) to produce a vector that lies in $W$.
\end{claim}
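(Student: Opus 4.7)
The plan is to reformulate the statement linearly: setting $k := \dim V - \dim W$, I want to exhibit a size-$k$ subset $S \subseteq \mathcal{B}$ such that $v$ agrees with some element of $W$ outside of $S$. Writing $v = \sum_{b \in \mathcal{B}} \lambda_b \, b$, modifying only the coordinates indexed by $S$ produces a vector of the form $v + \sum_{b \in S} \mu_b \, b$. Hence what I need is a single choice of $S$ (depending only on $W$ and $\mathcal{B}$, not on $v$) such that $v \in W + \span(S)$ for every $v \in V$, i.e.\ $V = W + \span(S)$.

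To find such an $S$, I pass to the quotient. Let $\pi : V \to V/W$ be the natural projection; then $\dim(V/W) = k$. Since $\mathcal{B}$ spans $V$, the image $\pi(\mathcal{B})$ spans $V/W$. From a spanning set I can extract a basis, so there exist $b_{1},\dots,b_{k} \in \mathcal{B}$ such that $\pi(b_1),\dots,\pi(b_k)$ is a basis of $V/W$. Take $S = \{b_1,\dots,b_k\}$.

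Now for any $v \in V$, expand $\pi(v) = \sum_{j=1}^{k} c_j \, \pi(b_j)$ uniquely in this quotient basis, and set
\[
w \;:=\; v - \sum_{j=1}^{k} c_j \, b_j .
\]
Then $\pi(w) = 0$, so $w \in W$, and $w$ differs from $v$ in only the coordinates indexed by $S$, which has size $k = \dim V - \dim W$. This proves the claim.

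There is no real obstacle here: the statement is essentially a restatement of the fact that any subspace of codimension $k$ admits a complementary subspace spanned by $k$ elements of any prescribed basis of $V$. The only point worth being careful about is that the set $S$ must be chosen uniformly in $v$ (which is automatic from the quotient-basis construction), and that one can indeed extract a basis of $V/W$ from the spanning set $\pi(\mathcal{B})$, which is a standard linear-algebra fact valid over any field.
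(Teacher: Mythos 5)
Your proof is correct. It takes a mildly different route from the paper's: you work in the quotient $V/W$, extract from the spanning set $\pi(\mathcal{B})$ a basis $\pi(b_1),\dots,\pi(b_k)$ of the quotient, and subtract the corresponding combination of $b_1,\dots,b_k$ from $v$ — so you directly identify the $k=\dim V-\dim W$ coordinates to \emph{change}. The paper instead parametrizes $W$ as the image of an $n\times r$ rank-$r$ matrix $M$ (with $r=\dim W$), selects $r$ rows of $M$ forming an invertible submatrix, and solves for the unique element of $W$ agreeing with $v$ on those $r$ coordinates — i.e., it identifies the $n-r$ coordinates to \emph{keep} fixed, which is the complementary subset. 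The two arguments are dual: your "basis of the quotient extracted from $\pi(\mathcal{B})$" corresponds to the paper's "maximal independent set of rows of $M$." Your version is somewhat cleaner and avoids coordinates until the last step; the paper's is more explicitly computational. Both correctly observe (you explicitly, the paper implicitly) that the modified coordinate set can be chosen independently of $v$.
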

\begin{proof}
Let $n = \dim{V}$ and $r = \dim{W}$. There exists an $n \times r$ rank-$r$ matrix $M$ such that $W$ is the image of the linear transformation given by $M$. We need to show that there is a vector $u$ agreeing with $v$ on $r$ coordinates such that there exists $x \in \F^r$ with $Mx=u$. We can find a size-$r$ subset $S \in [n]$ such that the rows of $M$ indexed by $S$ span all the rows of $M$. For $i \in S$, let $u_i = v_i$. Let $A$ be the $r \times r$ submatrix of $M$ consisting of the rows indexed by $S$. Since $A$ is full rank, there exists exactly one $x \in \F^r$ such that $Ax=(u_i)_{i \in S}$. The other rows of $M$ are spanned by the rows of $A$, so we can choose $u_i$ for each $i \in [n]-S$ by multiplying the $i$th row in $A$ by $x$. Since $Ax = u$, $u \in W$ and we are done.
\end{proof}

\begin{cor}\label{cor-approx}
	Let $f \in \cF(q,n)$ be any function. Then, for all $d \leq n$, there exists a polynomial $P \in \cF_d(q,n)$ that 
	$$ |\{ x \in \F_2^n \,|\, f(x) \neq P(x) \}| \leq q^n - m_d(q,n).$$
\end{cor}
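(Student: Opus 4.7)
The plan is to obtain the corollary as an immediate instance of Claim \ref{cla-dim}, by choosing the ambient vector space, the subspace, and the basis correctly. Take $V = \cF(q,n)$, which has dimension $q^n$, and $W = \cF_d(q,n)$, which has dimension $m_d(q,n)$. The key step is to pick the basis $\cB$ of $V$ to be the collection of delta functions $\{\delta_z\}_{z \in \F_q^n}$, where $\delta_z(y) = 1$ if $y=z$ and $0$ otherwise. In this basis, the coordinates of an arbitrary $f \in \cF(q,n)$ are exactly the evaluation vector $(f(z))_{z \in \F_q^n}$, so ``modifying a coordinate'' literally means changing the value of $f$ at one point of $\F_q^n$.

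With this setup in hand, I would invoke Claim \ref{cla-dim} applied to $f \in V$: it guarantees the existence of a vector $g \in W$ that differs from $f$ in at most $\dim V - \dim W = q^n - m_d(q,n)$ coordinates. Translating back through the identification of $V$ with functions, $g$ is a polynomial $P \in \cF_d(q,n)$ which satisfies $|\{x \in \F_q^n : f(x) \neq P(x)\}| \leq q^n - m_d(q,n)$, which is exactly the conclusion (the ``$\F_2^n$'' in the statement is a typo for $\F_q^n$).

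There is essentially no obstacle here, since the corollary is simply Claim \ref{cla-dim} rephrased in the language of evaluation coordinates. The only thing worth double-checking is that the delta functions really do form a basis of $\cF(q,n)$, which is clear because any function is uniquely determined by, and can be written as a linear combination of, these $q^n$ delta functions, matching the known dimension of $\cF(q,n)$ as an $\F_q$-vector space.
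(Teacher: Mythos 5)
Your proof is correct and is exactly the argument the paper intends: the paper's one-line proof just cites Claim~\ref{cla-dim} together with $\dim(\cF_d(q,n)) = m_d(q,n)$, and your choice of $V = \cF(q,n)$, $W = \cF_d(q,n)$, and the delta-function basis is the natural (implicit) instantiation. You are also right that $\F_2^n$ in the statement is a typo for $\F_q^n$.
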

\begin{proof}
	This follows from the previous claim and from the fact that $\dim(\cF_d(q,n)) = m_d(q,n)$.
\end{proof}

We are now ready to prove our main result.
\begin{proof}[Proof of Theorem~\ref{thm-main}]
	Let $f: \F_q \mapsto \F_q$ be as in the theorem and let $\eps>0$. Using Claim~\ref{cla-largedev} and Corollary~\ref{cor-approx}, we can find $\delta>0$ and a polynomial $P$ of degree at most $d = (1-\delta)(q-1)n$ such that $P$ agrees with $f$ on all but $q^{\eps n} = N^\eps$ values in $x \in \F_2^n$. Let $M$ denote the $q^n \times q^n$ matrix with entries $M_{x,y} = f(x+y)$ and let $L$ denote the matrix of the same dimensions with entries $L_{x,y} = P(x+y)$. Then, $M$ and $L$ differ in at most $N^\eps$ entries in each row and in at most $N^{1+\eps}$ entries altogether. Now, by Lemma~\ref{lem-clp} (the CLP lemma) we have that $ \rank(L) \leq m_{\lfloor d/2\rfloor}(q,n)$. But $d/2 = (1/2 - \delta/2)(q-1)n$ and so, by the Chernoff-Hoeffding bound, we have $m_{\lfloor d/2\rfloor}(q,n) \leq q^{(1-\eps')n} = N^{1-\eps'}$ for some $\eps'>0$ depending on $\delta$ (which in turn depends on $q$ and on $\eps$). This concludes the proof.
\end{proof}

\end{document}